\lstdefinelanguage
[x64]{Assembler}     
[x86masm]{Assembler} 
{morekeywords={movq, shlq, orq, addq, retq, movl, rax, rip, rcx}} 
\newcommand{\Abbildungps}[5]{%
	\begin{figure}[#1]%
		\begin{center}
			\scalebox{0.9}{\includegraphics*[width=#2\textwidth]{#3}}%
			\caption{#5}%
			\label{#4}%
		\end{center}
		\vspace{-0.43cm}
	\end{figure}%
}
\newcommand{\N}{\mathbb{N}}
\begin{document}
\title{Enabling Cross-Event Optimization in Discrete\hyp{}Event Simulation\\Through Compile-Time Event Batching}

\author{\IEEEauthorblockN{Marc Leinweber\\ and Hannes Hartenstein}
\IEEEauthorblockA{Karlsruhe Institute of Technology\\
Karlsruhe, Germany}
\and
\IEEEauthorblockN{Philipp Andelfinger}
\IEEEauthorblockA{Nanyang Technological University\\
Singapore}
}


\maketitle
\thispagestyle{plain}
\pagestyle{plain}

\begin{abstract}
A discrete-event simulation (DES) involves the execution of a sequence of event handlers dynamically scheduled at runtime. As a consequence, a priori knowledge of the control flow of the overall simulation program is limited.
In particular, powerful optimizations supported by modern compilers can only be applied on the scope of individual event handlers, which frequently involve only a few lines of code.
We propose a method that extends the scope for compiler optimizations in discrete-event simulations by generating batches of multiple events that are subjected to compiler optimizations as contiguous procedures.
A runtime mechanism executes suitable batches at negligible overhead.
Our method does not require any compiler extensions and introduces only minor additional effort during model development.
The feasibility and potential performance gains of the approach are illustrated on the example of an idealized proof-of-concept model. We believe that the applicability of the approach extends to general event-driven programs.
\end{abstract}

\section{Introduction}

In discrete-event simulations (DES), events are executed one after the other in the order of their time stamps. Due to the stochastic nature of most simulation models, the execution order is not known a priori. In particular, the execution order is not available during compilation. Modern compilers for languages such as C++ support powerful optimizations to reduce execution times and executable file sizes~\cite{grune-compiler} such as removal of redundant computations, instruction reordering to improve branch prediction and hide memory access latencies, or inline expansion of function calls to reduce call overheads. Unfortunately, the unpredictable control flow of discrete-event simulations limits the scope of such optimizations to individual event handlers. As an example, consider a situation where an event reverses the state changes performed by its preceding event. Although the execution of the two events has no effect, the situation cannot be detected by the compiler and this unnecessary computation is performed in full.

In this paper, we propose a general method to extend the scope for compiler optimizations in discrete-event simulations implemented in C++ beyond individual events. The approach can be applied both to sequential and parallel simulations. Although our implementation relies on C++ templates, the method only requires suitable metaprogramming facilities that are available in a number of programming languages.

\Abbildungps{t}{0.5}{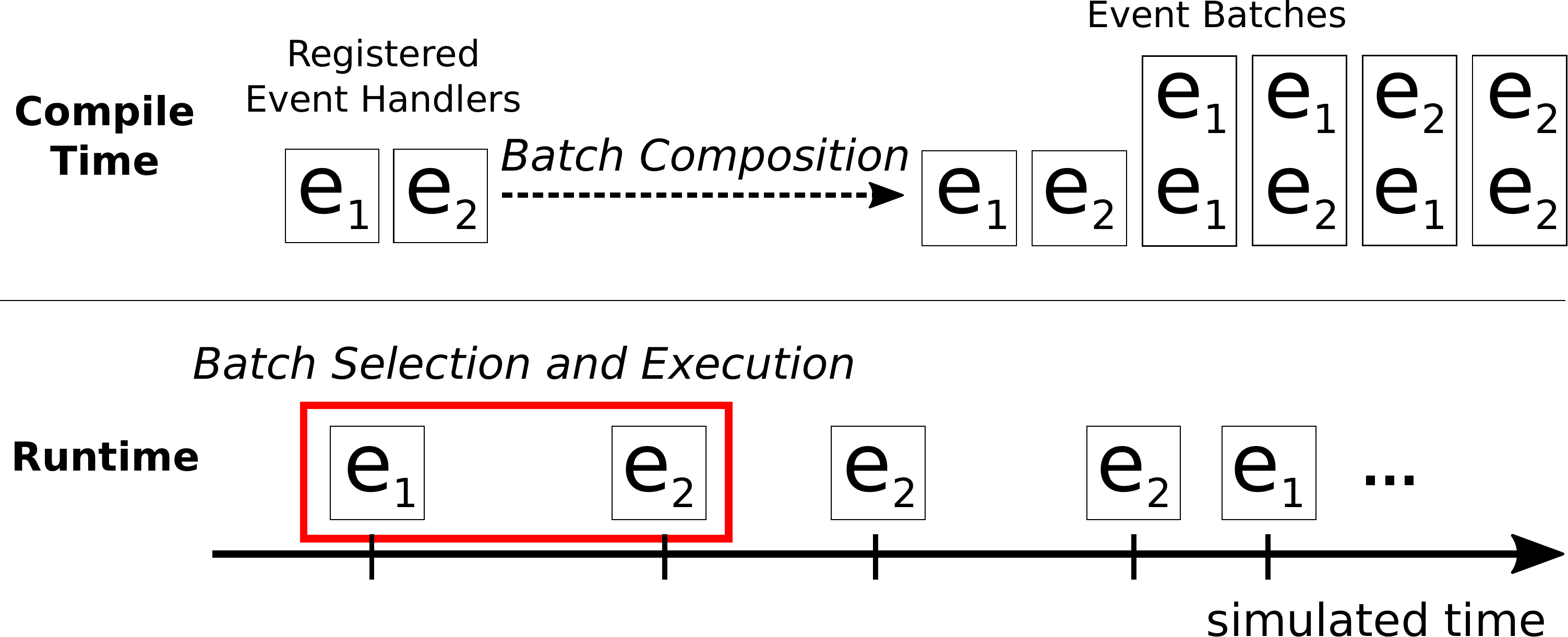}{img:sketch}{Overview of the proposed approach: during compilation, batches are composed from registered event handlers. At runtime, batches are identified in the set of pending events and executed as a unit.}

The approach is illustrated in Figure~\ref{img:sketch}. During compilation, all combinations of event handlers (``batches'') up to a configurable length are composed automatically using metaprogramming constructs. Each batch is a simple concatenation of the code associated with the individual event handlers. The benefit of the approach is that batches can be considered for compiler optimizations as contiguous code fragments, substantially improving the scope for optimizations. During runtime, appropriate batches are selected and executed in place of the original sequence of event handlers. The length of the executed batches varies, since similarly to parallel and distributed simulations~\cite{FujParallel}, dependencies between events must be respected. Model knowledge is applied to guarantee that batches are executable without violating the correctness of the simulation. We make our prototypical implementation of the approach available to the community~\footnote{\url{https://github.com/batched-DES/prototype}}.

The remainder of the paper is organized as follows: in Section~\ref{sec:related_work}, we discuss related work. In Section~\ref{sec:procedure}, we present a method to compose event batches at compile time and a mechanism to execute suitable batches at runtime. In Section~\ref{sec:evaluation}, we evaluate the approach w.r.t.~the effects on compilation and runtime performance, as well as the impact on model development. With Section~\ref{sec:conclusion} the results are summarized and the paper is concluded.

\section{Related Work}
\label{sec:related_work}

Some previous works have considered events as batches to reduce simulation runtimes. However, in contrast to our work, the existing approaches have focused on parallelized simulations and have not considered enabling compiler optimizations across events.

In optimistically synchronized parallel and distributed simulation, a rollback mechanism is required to undo erroneous event executions. Several algorithms have been proposed to reduce the scale and overhead of rollbacks~\cite{FujParallel}. Zeng et al.~proposed an approach to roll back batches of events at once. This is achieved by maintaining sufficient information at each processor so that during a rollback, a correct state can be achieved by rolling back a batch of events locally without inter-processor communication~\cite{batchcancel}.


In 2017, Gupta and Wilsey~\cite{wilseyGroup} performed experiments on executing batches of events in optimistically synchronized simulations using the framework~\textit{Warped2} to decrease the contention created by concurrent accesses to the shared data structure holding future events.
Comparing the performance under different policies for executing events as batches, the authors report a speedup of up to $2.5$ over executing events one after the other.

The idea of merging computations into a single step has been explored in general HPC (``superblock technique''~\cite{superblock}) and in the context of computations on graphics cards (``thread coarsening''~\cite{coarsening} and ``kernel fusion''~\cite{thrust}). In contrast to these works, our approach addresses the challenges given by the limited predictability of the order and dependencies among computations in DES.


\section{Proposed Method}
\label{sec:procedure}

In the following section we propose a method for batch composition and scheduling. The main steps of the approach are as follows:
\begin{itemize}
	\item \textbf{Enumeration and composition} of all combinations of event types up to a predefined length during compilation.
	\item \textbf{Selection and execution} of the correct batches during runtime depending on the set of pending events without violation of the causality constraint (non-decreasing time stamp order of the executed events).
\end{itemize}

The batches of event handlers are composed during compilation. The input for this process is the set of event handling functions which are registered in an array by the modeler. In Section~\ref{sec:batch_composition}, a compile-time algorithm for the batch composition is described. We developed an event scheduler that maintains the execution order when executing batches instead of single events. Events are extracted as long as it can be guaranteed that the causality constraint is not violated. Subsequently, the corresponding batch is selected and executed. In Section~\ref{sec:batch_scheduling}, the scheduling method is presented.

The event batches are constructed during compilation through compile-time metaprogramming using C++ templates. Metaprograms are programs that manipulate executable code~\cite{metaprogramming}. Originally, C++ templates were intended as a mechanism for generic programming by allowing templated functions to be \textit{instantiated} for a specific data type during compilation~\cite{c++}. However, in 2003 Veldhuizen showed that C++ templates are Turing complete~\cite{templateTuring}.

\subsection{Batch Enumeration and Composition}
\label{sec:batch_composition}
As mentioned above, an event batch is a concatenation of $ n \in \N$ event handlers. To be able to execute the composed batches during runtime, the batches have to be identified uniquely. Hence, a system is needed that constructs identifiers based on the event types that contribute to a batch. In this Subsection, we present an algorithm that enumerates all possible batches up to a configurable length and that uses number system transformation based on a variation of the Horner scheme~\cite{horner}. We have chosen this approach to achieve a clean implementation and a scheme that can be efficiently evaluated both during runtime (cf.~Sec.~\ref{sec:batch_scheduling}) and compile-time.

The set of event handlers can be interpreted as the characters of an alphabet $\Sigma$. The resulting formal language of all event handler batches is $L = \Sigma^*$, where $*$ is the Kleene closure. For instance, the Kleene closure for the alphabet $\Sigma = \{a, b\}$ starts with~$\{\epsilon, a, b, aa, ab,...\}$, where $\epsilon$ is the empty string. Since formal languages are recursively enumerable \cite{formalLanguages}, a bijection $f$ can be found between $N$ and $L$. If the cardinality $|\Sigma|$ of $\Sigma$ is interpreted as the base of a number system, a word of~$\Sigma^*$ is a representation of a natural number in the system with base~$|\Sigma|$. The characters of the alphabet~$\Sigma$ are interpreted as digits of a number system. However, if the first character $a$ corresponds to the digit $0$, it has no effect on the batch identifiers ($aba$ will have the same id as $ba$). Hence, we introduce an explicit character $\nu$ signifying ``no event''. However, by including the $\nu$-event, redundant batches are generated. If $\Sigma = \{a\}$, it has to be expanded to $\Sigma_{\nu} = \{\nu, a\}$. If additionally $n = 2$, the words of $\Sigma_\nu^*$ with maximum length 2 are ${\nu, a, \nu \nu, \nu a, a \nu, aa}$. Obviously, the codes of the batches $a$, $\nu a$ and $a \nu$ are equivalent and thus redundant (cf.~Sec.~\ref{sec:evaluation}). Let $|\Sigma_\nu|$ be the number of event handlers (including the $\nu$-event) and $n$ the maximum batch length. Then~$B = \sum_{i=1}^{n}|\Sigma_\nu|^i$ event batches exist. 





We assume that during model development, function pointers to all event handlers have been added to a constant array. Now, during compilation, all batch identifiers up to $B$ are enumerated by recursive evaluation of template functions. For each enumerated identifier a template function is invoked that transforms the identifier to the different event types and that batches the corresponding event handling functions in the correct execution order. Function pointers to the composed batch handlers are stored in an array to enable the runtime mechanism (cf.~Sec.~\ref{sec:batch_scheduling}) to address and execute batches at runtime.

App.~\ref{app:pseudocode} lists the pseudocode for the meta program described above.




\subsection{Batch Selection and Execution}
\label{sec:batch_scheduling}

To maintain correct simulation results, a batch should only be executed if the contained sequence of events cannot be affected by its execution. Here, as in conservatively synchronized parallel and distributed simulations, we require model knowledge to define a \emph{lookahead}~\cite{FujParallel}, i.e., a minimum delta between an event's execution and creation time stamps.

At runtime, our batch scheduler uses a dynamic lookahead window. We assume that a lookahead value is associated with each event type. Events are extracted one by one while their execution time lies within the dynamic lookahead window. 

\Abbildungps{b}{0.5}{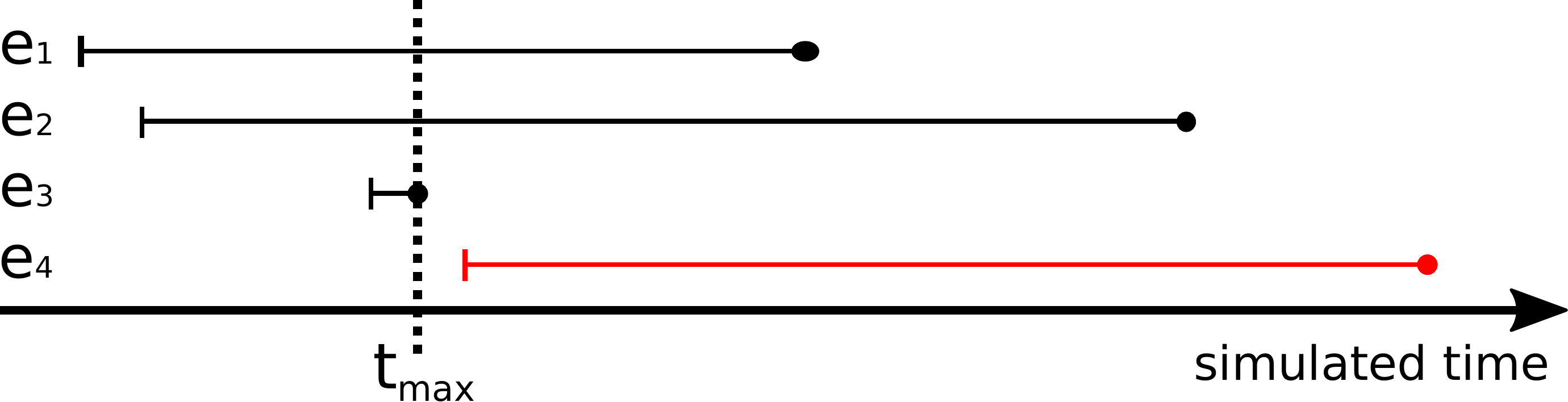}{img:lookahead}{Dynamic lookahead window (vertical line): according to the next events' time stamps and lookaheads, $t_{max}$ allows for the execution of a batch comprised of $e_1e_2e_3$.}

In~Fig.~\ref{img:lookahead}, the consideration of lookahead during batch extraction is illustrated. 
We iterate over the future events and compute the minimum of the sum of the events' respective time stamp and lookahead. Once an event's time stamp is larger than the current minimum, the batch extraction terminates. In the figure, the batch $e_1 e_2 e_3$ is executed. In effect, if $t_e$ is the execution time of event $e$ and $l_e$ is the lookahead of $e$ according to its type, we compute $t_{max} = \min_{e \in E}(t_e + l_e)$ where $E$ is the set of future events up the configured maximum batch length. Following Section~\ref{sec:batch_composition}, the event type digits contribute, depending on their position in the batch, to the batch's identifier which then is used to execute the before composed batch. In addition to the lookahead window, the number of events in a batch is limited by the configured maximum batch length, which is a tuning parameter.




\section{Evaluation}
\label{sec:evaluation}
In the following, we demonstrate the feasibility of the approach by showing successful cross-event compiler optimization and the associated speedup for a synthetic simulation model. Subsequently, we evaluate the increase in compile times incurred by the batching process. Finally, we discuss the deployability as well as limitations and potential improvements of our approach.

\subsection{Proof of Concept}
We evaluate our approach using a synthetic simulation model that performs redundant computations across events, providing substantial opportunities for cross-event compiler optimizations. The model is based on two event types: as a computationally intensive event, the \texttt{Increment} event performs a million iterations of \texttt{sum += sum + 1} on the global variable \texttt{sum}. The \texttt{Set} event sets the global \texttt{sum} variable to the constant value 10. For simplicity, neither of the event types schedules new events. However, our approach poses no limitations on event scheduling at runtime. Since the execution of a \texttt{Set} event after an \texttt{Increment} event renders the computation of the \texttt{for}-loop obsolete, the compiler should remove the loop from batches with this sequence of events entirely. 

A similar sequence of computations could be given in a model of wireless communication: suppose a node in a simulated network periodically broadcasts messages to nearby receivers. The successful reception depends on whether the receiver is in a power-saving state. If none of the nearby nodes is ready to receive, the computations involved in the creation of the message could be avoided entirely. If a sequence of events in a batch makes it impossible for the results of the message creation to be used, the compiler could remove the message creation code.

%

The model was compiled with \texttt{clang++} version 3.8.0 on an \texttt{Intel Core i5-6600K CPU~@~3.5GHz} with 32GB of main memory running Ubuntu 16.04.3. 

For an examination of the generated assembly code, we set the maximum batch length to 2 and thus composed $\frac{1 - (2+1)^3}{1 - (2+1)} - 1 = 12$ batches. The examination confirmed the successful cross-event optimization in this proof-of-concept example. As expected, when an \texttt{Increment} event is not followed by a \texttt{Set} event within the same batch, the compiler generates assembly code corresponding to the \texttt{Increment} event. However, if the \texttt{Increment} event is followed by a~\texttt{Set} event, the loop is omitted entirely, leaving only the assignment of the \texttt{Set} event. 



\subsection{Speedup}

We tested 24 different configurations, varying in the maximum batch length and the proportion $p_s$ of \texttt{Set} events. We set $p_s$ to $5\%$, $25\%$, $50\%$, and $75\%$. 

For each simulation run, $1\,000\,000$ initial events were scheduled. No new events were scheduled during the simulation. We scheduled one event at each integer time step. The lookahead was set to $1\,000\,000$ units of time, i.e., all executed batches had the maximum batch length.

\Abbildungps{t}{0.49}{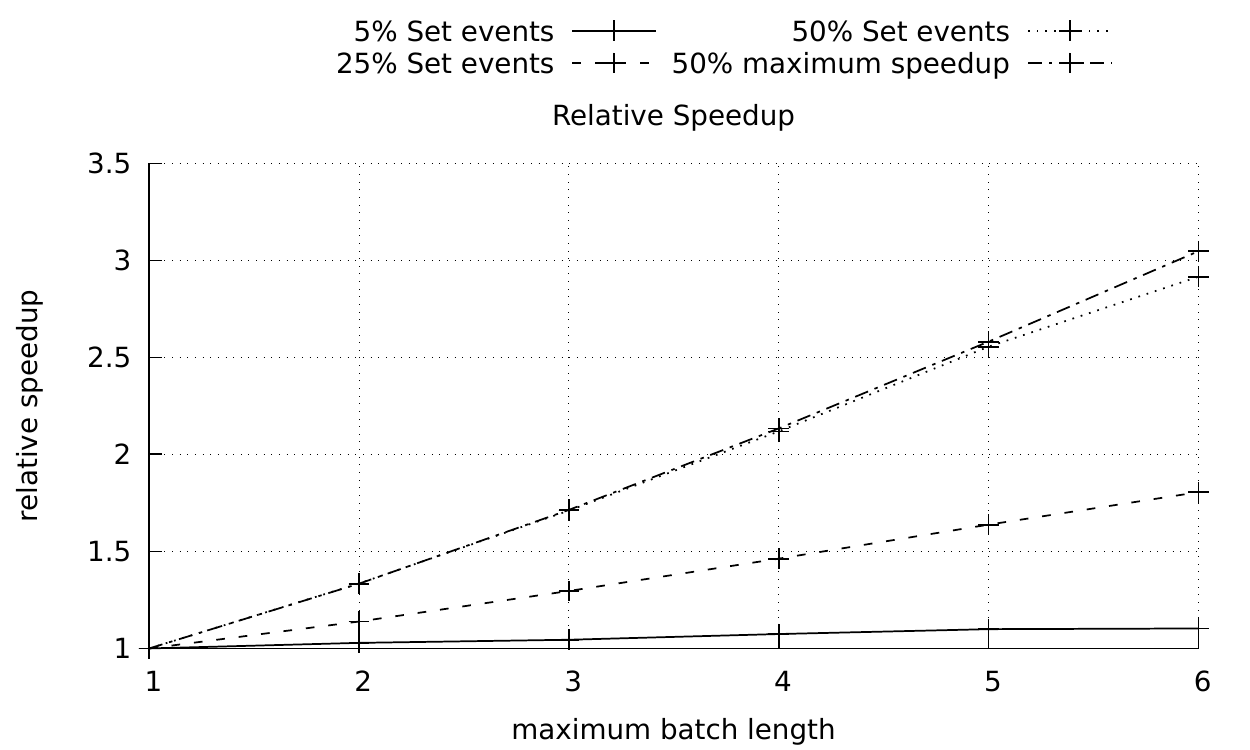}{img:speedup}{Speedup by event batching for synthetic proof-of-concept model.}

Each configuration was run $20$ times. For each run, a different seed for the pseudo-random number generator was used. In Fig.~\ref{img:speedup}, the result is plotted. The achieved speedup depends on the chance that the computation-intensive \texttt{Increment} event may be omitted.

We can observe that in an idealized case, event batching achieves substantial speedup.
Assuming that the impact of \texttt{Set} events on the runtime is negligible, given a fixed batch length $n$, the proportion $p_i = 1 - p_s$ of \texttt{Increment} events, the maximum possible speedup $s_\text{max}$ can be calculated based on the expected number of \texttt{Increment} events in each batch, and the probability that an \texttt{Increment} event at a certain position in the batch will not be followed by any \texttt{Set} events:
$s_\text{max} = {n{p_i}}({\frac{1 - {p_i}^{n+1}}{1 - p_i} - 1})^{-1} = \frac{n(1-p_i)}{1 - {p_i}^n}$ (see App.~\ref{app:proof} for a proof). In Figure~\ref{img:speedup}, $s_\text{max}$ is plotted for $p_s = 0.5$. With increasing batch lengths, $s_\text{max}$ approaches $n/2$. Our performance measurements closely approximate the maximum possible speedup.


To investigate the overhead incurred purely by the runtime batch selection, we compared the runtime with a one-by-one event execution as in common sequential discrete-event simulators. When executing $m$ \texttt{Set} events in the unbatched case and with our batching approach, we observed that the runtime batch selection adds overhead of about 5\% at an average batch length of 2.

\subsection{Compilation}

To benchmark the compile-time algorithm (cf.~Sec.~\ref{sec:batch_composition}), four configurations, varying in the number of different event types, were compiled 15 times. Once the compilation time for a single configuration exceeded 240 seconds, the compilation was stopped. The compile times are plotted in Fig.~\ref{img:compiletimes}. As expected from the increase in the batch count, the compile times increase exponentially. With ten different event types and a maximum batch length of 5, the compilation time exceeds 240 seconds drastically. 

\Abbildungps{t}{0.49}{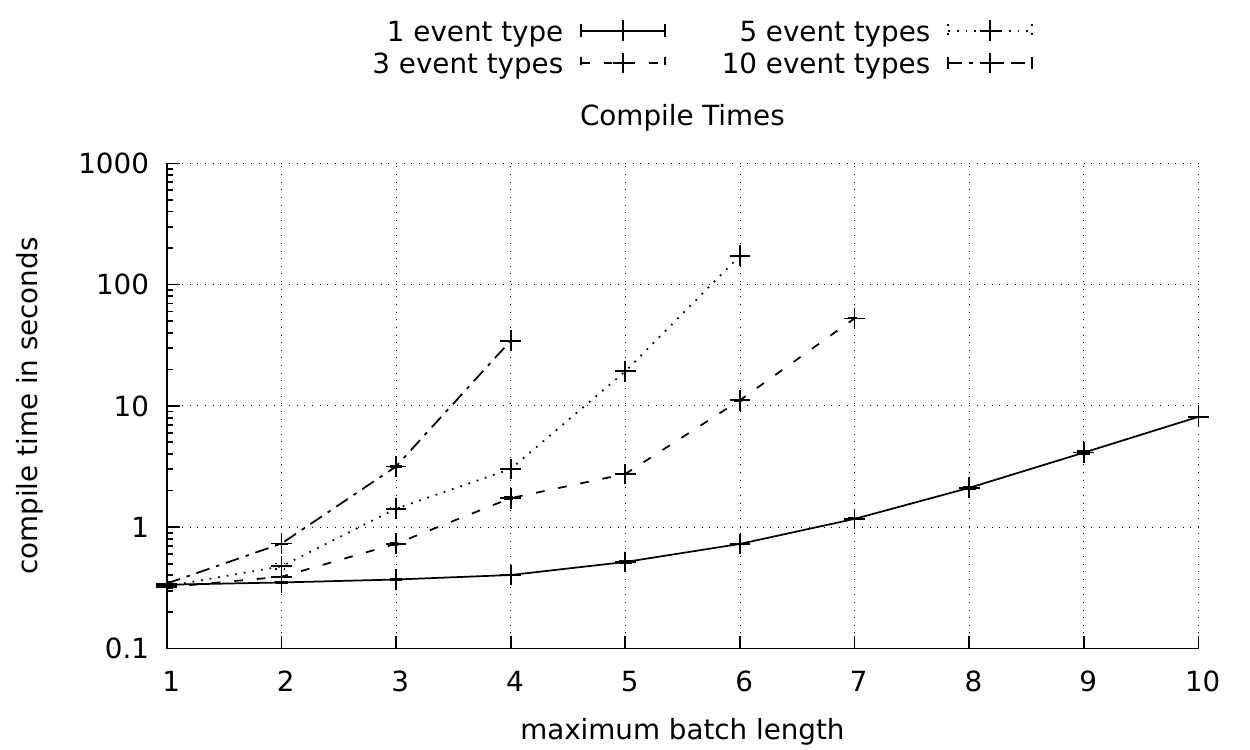}{img:compiletimes}{Effect of event batching on compile times.}

Although these numbers seem enormous, large C++ projects like simulation models can easily reach compile times in ranges of several minutes. As long as the number of event types is small, the relative impact on compilation times may be acceptable. Further, the compilation times can be reduced by choosing a smaller maximum batch length. Additionally, as discussed in Section~\ref{sec:batch_composition}, due to the need for a $\nu$-event, a substantial number of batches will never be used by the scheduler. Overall, $(\frac{1-(|\Sigma|+1)^{n+1}}{1-(|\Sigma|+1)} -1) - (\frac{1-|\Sigma|^{n+1}}{1-|\Sigma|} -1)$ redundant batches are composed. As an example, with five different event types and a maximum batch length of five, $9331$ batches (i.e., $58\%$) are redundant. In the future, a refined enumeration scheme could eliminate these redundant batches.

\subsection{Discussion}

In this section, we first state a number of limitations and technical aspects of the batching approach. Subsequently, we discuss potential avenues for future work.

Overall, the approach imposes only minor restrictions on the model development. A marginal burden on modelers is given by the assumption made in the batch composition that function pointers to all event handlers are initially stored in an array.

A key avenue for future work lies in reducing the overhead during compilation and at runtime.
Major opportunities for improvements are given within the batch composition process: since a large proportion of the composed batches are redundant, an improved enumeration scheme could reduce the number of batches.

During the execution of a batch, each event may generate new events immediately. By postponing the scheduling of all new events to the end of a batch execution, it may be possible to improve performance through a reduction in accesses to the data structure holding the scheduled events.

Currently, the runtime mechanism executes batches conservatively, i.e., never violating the simulation correctness. If the batch lengths achievable in this manner are small, a speculative approach could improve performance. Similarly to optimistically synchronized parallel simulations~\cite{FujParallel}, if the execution of a batch generates new events with time stamps earlier than the last event in the batch, a rollback mechanism could restore a correct simulation state.

Whether significant opportunities for cross-event compiler optimizations exist in real-world models is still to be investigated. For instance, in real-world simulation models, the assignment of events to simulated entities is typically only known at runtime. Thus, the compiler may not be able to determine whether a variable change by an event will be overwritten by a successor event.
It may be possible to formulate implementation guidelines for simulation models to maximize opportunities for optimizations.

A wide range of modern applications is implemented in an event-driven manner, i.e., events representing computational tasks are dynamically scheduled and extracted according to their priorities. We believe that the proposed batching approach is generic enough to be applied to event-driven applications beyond simulations.

Finally, since our approach relies on C++ template metaprogramming, the feasibility of the approach beyond C++-based simulators should be explored. Generally, the approach requires compile-time evaluated  metaprogramming facilities that support code transformation, code generation and reflection. However, reflection is not needed if constructs such as function pointers exist. Beyond ahead-of-time compilation, modern just-in-time compilers achieve remarkable optimization results through runtime profiling. Hence, just-in-time compilation could focus on the creation of relevant batches according to the observed runtime behavior of the considered simulation model.

\section{Conclusion}
\label{sec:conclusion}
We presented an approach that extends the scope of compiler optimizations in discrete-event simulation beyond individual events. Using C++ template metaprogramming, all possible sequences of events up to a configurable sequence length are composed at compile time. A runtime mechanism selects and executes event batches, relying on model-specific temporal properties to maintain correctness. We showed that the approach is feasible, achieves substantial speedup in a proof-of-concept example and does not add immense runtime overhead. Only minor additional effort is required by modelers to apply the approach. The main limitation is a substantial increase in compilation times and executable sizes, both of which may be improved on in future work by an advanced batch composition approach.
Further, we consider experiments with real-world discrete-event simulation models and general event-driven applications the most interesting avenues for further exploration.



%

\bibliographystyle{IEEEtran}
\bibliography{sample-bibliography}

\appendices

\newpage
\section{Batch Composition Meta Program Pseudo Code}
\label{app:pseudocode}

In Algorithm \ref{alg:batch_composition} the meta program pseudo code for the batch composition procedure, which takes place during compilation, is given. The function~\textsc{enumerateBatches()}~enumerates all possible batch IDs in a recursive manner. For each ID \textsc{genBatch()} is invoked where the different event types are identified with the modified Horner's scheme. The event types' handling functions are concatenated with \textsc{appendFuncCall()} into a single function body and the pointer to the resulting function is stored.

\begin{algorithm}
	
	\caption{Batch Composition Metaprogram}
	\label{alg:batch_composition}
	
	\footnotesize
	
	\begin{algorithmic}
		\\
		\State $eventHandlers$ : Array of $Functionpointer$
		\State $batchedHandlers$ : Array of $Functionpointer$
		\State $maxBatchSize$ : $\N$
		\State $eventTypeCount$ = $sizeof(eventHandlers)$ : $\N$
		\State $batchCount = \frac{1 - eventTypeCount^{maxBatchSize + 1}}{1 - eventTypeCount} - 1$ : $\N$
		\\ 
		\Function{generateBatches}{}
		\State $enumerateBatches(0)$
		\EndFunction
		\\
		\Function{enumerateBatches}{$batchID$ : $\N$}
		\If{$batchID = batchCount$} \Return
		\EndIf 
		\State $pointer = $ pointer to new empty function : $Functionpointer$
		\State $batchedHandlers[batchID] := genBatch(batchID, pointer)$
		\State $enumerateBatches(batchID + 1)$ \Comment{enumerate all indexes}
		\EndFunction
		\\
		\Function{genBatch}{$batchID$ : $\N$, $pointer$ : $Functionpointer$}
		\If{$batchID = 0$} \Return $pointer$ \Comment{batch is complete}
		\EndIf
		\State $eventIndex$ := $batchID \mathbf{\mod} eventTypeCount$ \Comment{factor of $k$th exponent}
		\If{$eventIndex > 0$} \Comment{check for $\nu$-event}
		\State $appendFuncCall(pointer, eventHandlers[eventIndex - 1])$
		\EndIf \\
		\State $genBatch((batchID / eventTypeCount), pointer)$ \Comment{continue with quotient}
		\EndFunction
		\\
	\end{algorithmic}    
\end{algorithm}

\newpage
\section{Proof of $s_{max}$}
\label{app:proof}

\newtheorem{lemma}{Lemma}
\newtheorem{satz}{Corollary}
\newcommand{\p}{{p_I}}
\noindent
$p_I\ \hat{=} $ probability of an \textit{Increment} Event\\
$p_S = 1 - \p \hat{=} $ probability of a \textit{Set} Event\\
$n\ \hat{=} $ batch length \\
$T_1 \hat{=} $ standard DES time \\
$T_p \hat{=} $ batched DES time \\
$S = \frac{T_1}{T_p} \hat{=} $ speedup\\
$E[S] = \frac{E[T_1]}{E[T_p]}$ \\
$E[T_1] = n \cdot \p$ \\
$E[T_p] = \sum_{j=1}^{n-1}(j \cdot \p^j \cdot p_S) + n \cdot \p$

\begin{lemma}{Closed form of $E[T_p]$.}
	\label{lemma}
	\begin{align*}
	E[T_p] = \sum_{j=1}^{n-1}(j \cdot \p^j \cdot p_S) + n \cdot p_i = \frac{1-\p^n}{\frac{1}{\p}-1}
	\end{align*}
\end{lemma}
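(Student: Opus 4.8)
The plan is to collapse the weighted sum into an ordinary geometric series. Writing $p_S = 1 - \p$, each summand becomes a weighted first difference, $j\,\p^j p_S = j(\p^j - \p^{j+1})$, which is exactly the shape that summation by parts (Abel summation) simplifies. So first I would rewrite the sum in this differenced form, then peel off the integer weights $j$ so that the interior terms telescope.

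Concretely, the steps in order are: (i) substitute $p_S = 1-\p$ and expand $\sum_{j=1}^{n-1} j(\p^j - \p^{j+1})$; (ii) reindex the subtracted piece by $k=j+1$ and combine, so that the coefficient of each interior $\p^j$ collapses from $j-(j-1)$ to $1$, yielding
\begin{align*}
\sum_{j=1}^{n-1} j\,\p^j p_S = \sum_{j=1}^{n-1} \p^j - (n-1)\p^n ;
\end{align*}
(iii) add the boundary term, which is the contribution $n\,\p^n$ of the all-\texttt{Increment} batch (probability $\p^n$, work $n$), giving the telescoped total $\sum_{j=1}^{n-1}\p^j + \p^n = \sum_{j=1}^{n}\p^j$; and (iv) evaluate this geometric sum as $\frac{\p(1-\p^n)}{1-\p}$, which is precisely $\frac{1-\p^n}{\frac{1}{\p}-1}$ after dividing numerator and denominator by $\p$.

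The main obstacle is purely clerical: the index shift in step (ii) must track the $j=1$ and $j=n-1$ endpoint terms correctly, since it is these boundary contributions, together with the added $n\,\p^n$, that make the interior telescope cleanly rather than leaving stray $\p^{n+1}$ remnants. As an independent check of the whole identity I would invoke the probabilistic reading of $E[T_p]$: it counts the \texttt{Increment} events whose result is not overwritten by a later \texttt{Set}, i.e.\ those lying in the maximal trailing run of \texttt{Increment}s. Summing the survival indicator over positions gives $\sum_{k=1}^{n}\Pr[\text{positions }k,\dots,n\text{ all Increment}] = \sum_{k=1}^{n}\p^{\,n-k+1} = \sum_{i=1}^{n}\p^i$, which reproduces the same geometric sum directly and confirms the algebra.
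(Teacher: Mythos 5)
Your proof is correct and takes essentially the same route as the paper's: after substituting $p_S = 1-p_I$, both arguments telescope the weighted sum via an index shift down to the geometric sum $\sum_{j=1}^{n} p_I^{\,j}$ and evaluate it as $\frac{1-p_I^{\,n}}{\frac{1}{p_I}-1}$, the only cosmetic difference being that the paper absorbs the boundary term $n\cdot p_I^{\,n}$ into the sum before shifting indices, whereas you telescope first and cancel the $(n-1)p_I^{\,n}$ remnant afterwards. You also correctly use $n \cdot p_I^{\,n}$ for the all-\texttt{Increment} term (as the paper's own proof does, silently fixing the misprint $n \cdot p_i$ in the lemma statement), and your indicator-variable cross-check via trailing runs is a valid independent confirmation.
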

\begin{proof}
	\begin{align*}
	E[T_p] &= \sum_{j=1}^{n-1} (j \cdot \p^j \cdot p_S ) + n \cdot \p^n
	\\ &= \sum_{j=1}^{n-1} (j \cdot \p^j \cdot (1-\p)) + n \cdot \p^n
	\\ &= (1-\p) \sum_{j=1}^{n-1} (j \cdot \p^j) + n \cdot \p^n
	\\ &= \sum_{j=1}^{n-1} j \cdot \p^j - \p \sum_{j=1}^{n-1} j \cdot \p^j + n \cdot \p^n
	\\ &= \sum_{j=1}^{n-1} j \cdot \p^j + n \cdot \p^n - \p \sum_{j=1}^{n-1} j \cdot \p^j
	\\ &= \sum_{j=1}^{n} j \cdot \p^j -\p \sum_{j=1}^{n-1} j \cdot \p^{j}
	\\ &= \sum_{j=1}^{n} j \cdot \p^j - \sum_{j=1}^{n-1} j \cdot \p^{j+1}
	\\ &= \sum_{j=1}^{n} j \cdot \p^j - \sum_{j=2}^{n} (j-1) \cdot \p^j
	\\ &= \sum_{j=1}^{n} j \cdot \p^j - \sum_{j=1}^{n} (j-1) \cdot \p^j
	\\ &= \sum_{j=1}^{n} (j - j + 1) \p^j
	\\ &= \sum_{j=1}^{n} \p^j
	\\ &= \p \sum_{j=1}^{n} \p^{j-1}
	\\ &= \frac{(1-\p) \cdot (\sum_{j=0}^{n-1}\p^j) \cdot \p}{1-\p}
	\\ &= \frac{(1-\p^n) \cdot \p}{1-\p}
	\\ &= \frac{1-\p^n}{\frac{1}{\p}-1} \qedhere
	\end{align*}
\end{proof}

\begin{satz}{Maximum speedup $s_{max}$.}
	\label{satz}
	The expected value for the maximum speedup is
	\begin{align*}
	s_{max} = E[S] = \frac{n \cdot (1 - \p)}{1 - \p^n}.
	\end{align*}
\end{satz}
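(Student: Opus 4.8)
The plan is to obtain $s_{max}$ directly from its definition $E[S] = \frac{E[T_1]}{E[T_p]}$ by substituting the two expectations that are already available. The numerator is given as $E[T_1] = n \cdot \p$, and Lemma~\ref{lemma} supplies the closed form $E[T_p] = \frac{1-\p^n}{\frac{1}{\p}-1}$. Since the genuinely laborious part---collapsing the sum $\sum_{j=1}^{n-1}(j \cdot \p^j \cdot p_S) + n \cdot \p^n$ via telescoping into a geometric series---has already been discharged in the lemma, the remaining task is nothing more than simplifying a nested fraction, and I would present it as such.

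Concretely, I would first clear the inner fraction in the denominator of $E[T_p]$ by observing $\frac{1}{\p} - 1 = \frac{1-\p}{\p}$, so that $E[T_p] = \frac{\p(1-\p^n)}{1-\p}$. Forming the quotient then gives
\begin{align*}
E[S] = \frac{n\p}{\dfrac{\p(1-\p^n)}{1-\p}} = \frac{n\p(1-\p)}{\p(1-\p^n)},
\end{align*}
and cancelling the common factor $\p$ from numerator and denominator yields
\begin{align*}
E[S] = \frac{n(1-\p)}{1-\p^n},
\end{align*}
which is exactly the claimed expression for $s_{max}$. I expect this to occupy only two or three lines.

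I do not anticipate any real computational obstacle, precisely because the lemma absorbs all the summation work; the only point worth flagging is interpretive rather than algebraic. The identity relies on defining the expected speedup as the ratio of expected runtimes, $E[T_1]/E[T_p]$, rather than as $E[T_1/T_p]$. This is the modelling assumption of the idealized setting---each batch position contributes independently and the per-event cost is dominated by the \texttt{Increment} loop---so I would state it plainly as the place where the argument trades exactness for tractability, and then let the routine simplification above close the proof.
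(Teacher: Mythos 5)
Your proposal is correct and follows essentially the same route as the paper: both substitute $E[T_1] = n\p$ and the lemma's closed form for $E[T_p]$ into $E[S] = E[T_1]/E[T_p]$ and then simplify the nested fraction, with your rewriting of $\frac{1}{\p}-1$ as $\frac{1-\p}{\p}$ being a merely cosmetic variant of the paper's algebra. Your remark that the result defines expected speedup as the ratio of expectations rather than $E[T_1/T_p]$ is a fair clarification, but it does not change the substance of the argument.
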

\begin{proof}
	\begin{align*}
	E[S] &=  \frac{E[T_1]}{E[T_p]}
	\\ &= \frac{n \cdot \p}{\frac{1-\p^n}{\frac{1}{\p}-1}}
	\\ &= (n \cdot \p) \cdot \frac{\frac{1}{\p}-1}{1-\p^n}
	\\ &= \frac{n \cdot \p \cdot \frac{1}{\p} - n \cdot \p}{1 - \p^n}
	\\ &= \frac{\frac{n \cdot \p}{\p} - n \cdot \p}{1 - \p^n}
	\\ &= \frac{n - n \cdot \p}{1 - \p^n}
	\\ &= \frac{n \cdot (1 - \p)}{1 - \p^n} \qedhere
	\end{align*}
\end{proof}

\end{document}